\newtheorem{theorem}{Theorem}
\newlength{\figwidth}
\begin{document}

\baselineskip 23.7pt
\title{Relay Selection and Performance Analysis in Multiple-User Networks}
\author{Saman Atapattu, Yindi Jing, Hai Jiang, and Chintha Tellambura
 \thanks{The authors are with the Department of Electrical and
   Computer Engineering, University of Alberta, Edmonton, AB T6G 2V4,
   Canada (e-mail: \{atapattu, yindi, hai1\}@ualberta.ca, chintha@ece.ualberta.ca).}}

\maketitle
\begin{abstract}
This paper investigates the relay selection (RS) problem in networks with multiple users and multiple common amplify-and-forward (AF) relays. Considering the overall quality-of-service of the network, we first specify our definition of optimal RS for multiple-user relay networks. Then an optimal RS (ORS) algorithm is provided, which is a straightforward extension of an RS scheme in the literature that maximizes the minimum end-to-end receive signal-to-noise ratio (SNR) of all users. The complexity of the ORS is quadratic in both the number of users and the number of relays. Then a suboptimal RS  (SRS) scheme is proposed, which has linear complexity in the number of relays and quadratic complexity in the number of users. Furthermore, diversity orders of both the ORS and the proposed SRS are theoretically derived and compared with those of a naive RS scheme and the single-user RS network. It is shown that the ORS achieves full diversity; while the diversity order of the SRS decreases with the the number of users. For two-user networks, the outage probabilities and array gains corresponding to the minimum SNR of the RS schemes are derived in closed forms. It is proved that the advantage of the SRS over the naive RS scheme increases as the number of relays in the network increases. Simulation results are provided to corroborate the analytical results.

\end{abstract}
\begin{keywords}
Array gain, diversity order, multiple-user networks, outage probability, relay selection.
\end{keywords}
\newpage
\section{Introduction}
Multiple-input-multiple-output (MIMO) techniques are effective ways of achieving spatial diversity in wireless communications. Since installing multiple antennas on wireless nodes is not always possible in practice, cooperative communication, a concept that takes advantage of the possible cooperation among multiple nodes in a network to form virtual MIMO configuration, has received significant attention in the wireless community \cite{Sendonaris2003a, Laneman2004}. 
For cooperative networks with multiple relays, relay selection (RS) is one important and effective technique because properly designed RS strategies can achieve full spatial diversity with low complexity and overhead.

RS problems have been extensively studied in the open literature for networks with single source-destination pair, referred to as {\it single-user networks, e.g.,} \cite{Bletsas06,Adve2007,Molisch2008,Karagiannidis2008,Yindi09a}.
Recently, there is increasing interest in relay networks with multiple source-destination pairs, referred to as {\it multiple-user networks}. Typical multiple-user networks include ad-hoc, sensor, and mesh networks. For a multiple-user multiple-relay network, proper RS is vital, however, limited attention has been paid to the RS problem. RS schemes proposed for single-user networks cannot be extended to multiple-user networks straightforwardly due to the challenges in the performance evaluation,  the competition among users, and the increased complexity  \cite{Yu07}. 

There are some research efforts on RS in multiple-user networks. A multiple-user multiple-relay network is considered with amplify-and-forward (AF) relaying and decode-and-forward (DF) relaying in \cite{Niu2010} and \cite{Ding2011}, respectively. Among the multiple users, the best user is first selected based on the quality of its direct link to its destination, then the selected user chooses the relay through which it can obtain the maximum end-to-end receive signal-to-noise ratio (SNR). Other users are not allowed to transmit. So in \cite{Niu2010} and \cite{Ding2011}, only one user with its best relay is selected at a time. \cite{Wittneben2010} considers a multi-user network, in which all relays are clustered into two groups based on the available channel state information (CSI). Only one relay group is selected, and all users communicate with their destinations through all relays in the selected group. In other words, multiple users and multiple relays are selected at a time, and each user is helped by all selected relays.

There is also some limited research work \cite{Nosratinia07,Xuemin2008,Sharma2010} on multiple-user multiple-relay networks in which at a time multiple users can transmit and each user is helped by a distinct relay set. Since the RS for one user may impact the choices of other users, the RS problem becomes more challenging.
In \cite{Nosratinia07}, grouping and partner selection for cooperative networks with DF relaying are considered. It investigates how to allocate relays to assist users and analyzes the effect of allocation policies on network performance. For each user, the relays are selected based on the strength of the user's channels to the relays. 
In \cite{Xuemin2008}, a single-user network is first considered.  Ensuring that relaying can achieve a larger channel capacity than direct transmission, a sufficient condition based on channel quality is derived to find a feasible set of relays for the single-user network. Then the work is extended to the multiple-user case, in which a semi-distributed RS is proposed to maximize the minimum capacity experienced by the users. However, the proposed scheme does not guarantee optimality because each user chooses a relay in its feasible set randomly. In \cite{Sharma2010}, an RS scheme that maximizes the minimum achievable data rate among all users is proposed. The complexity of the  scheme is linear in the number of users and quadratic in the number of relays. The work in \cite{Sharma2010} focuses on the proof of the optimality of the RS scheme. Analytical performance evaluation is not
provided. 

In this research, we consider a multiple-user multiple-relay network in which each user can only be helped by a single relay and one relay can help at most one user. The new contributions of this paper are listed as follows. \begin{itemize}
\item We specify an optimality measure of RS for multiple-user relay networks. Comparing with the previous used optimality, maximizing the minimum receive SNR among users, this measure guarantees the uniqueness of the optimal solution and takes into account the performance of all users in addition to the worst one. An optimal relay selection (ORS) scheme is provided, which is a straightforward extension of the minimum-SNR-maximizing RS scheme proposed in  \cite{Sharma2010}. The complexity of the ORS is quadratic in  both the number of users and the number of relays.
\item We propose a sub-optimal RS (SRS) scheme, whose complexity is linear in the number of relays and quadratic in the number of users. Thus, for networks with a large number of relays, the SRS is much faster than the ORS.
A naive RS scheme is also introduced as a benchmark, in which users select relays one by one based on their user indices.

\item For the ORS, the proposed SRS, and the naive RS, diversity orders are analyzed theoretically based on the minimum SNR among users using order statistics. It is shown that for a network with $N$ users and $N_r$ relays, with ORS, all users achieve diversity order $N_r$, which is the full diversity order of a single-user network with $N_r$ relays. Thus, user competition does not affect diversity order if optimally designed. For the SRS and naive RS, however, an achievable diversity order of all users is shown to be $N_r-N+1$.
\item For two-user networks, tight upper bounds on the outage probabilities of the ORS, SRS, and naive RS are derived. It is shown that the SRS achieves better array gain than the naive RS, and the advantage increases as there are more relays available in the network.
\item Numerically simulated outage probabilities are illustrated to justify our analytical results and show the advantage of the proposed SRS over the naive RS.

\end{itemize}

The rest of the paper is organized as follows. The system model and order statistics of receive SNRs are provided in Section \ref{sec-model}. RS schemes are introduced and discussed in Section \ref{sec-algo}, in which we introduce an ORS scheme, which is an extended version of the RS scheme in \cite{Sharma2010}, propose an SRS with lower complexity, and introduce a naive RS scheme for performance benchmark. The diversity orders of the ORS, the proposed SRS, and the naive RS schemes are analyzed in Section \ref{sec-div}. Outage probabilities of the three schemes in two-user networks are derived in closed-forms in Section \ref{sec-out}. Numerical results and the concluding remarks are presented in Section \ref{sec-simu} and Section \ref{sec-con}, respectively. 
\section{System Model and Order Statistics of Receive SNRs}
\label{sec-model}
Consider a wireless relay network with $N$ users sending information to their destinations via $N_r$ relay nodes, as shown in Fig. \ref{fig_system}.
Each node has a single antenna. The power budget is $P$ for each user and $Q$ for each relay. The fading coefficients from the $i$th user to the $j$th relay and from the $j$th relay to the $i$th destination are denoted as $f_{ij}$ and $g_{ji}$, respectively. There is no direct link between a user and its destination. All channels are assumed to be independent and identically distributed (i.i.d.) complex Gaussian fading with zero-mean and unit-variance, i.e., $f_{ij},g_{ji}\sim \mathcal{CN}(0,1)$. The channel amplitude thus follows Rayleigh distribution.  The RS schemes introduced and proposed in this paper are centralized. Thus a master node, which can be a destination, assumed to have perfect and global CSI, is in charge of the RS process.
To conduct AF relaying, the relays are assumed to know their channels with the users.

The users need the relays' help to send information to their destinations. In this paper, we assume that each user will be helped by one and only one relay to minimize the synchronization requirement on the network. Since multiple relays' participation consumes more power, it also has the potential of power-saving. We also assume that each relay can help at most one user. This is to avoid having too much load on one relay, which may prolong network lifetime \cite{Laneman2004,Xuemin2008}. Thus, we need $N_r\ge N$.

A conventional half-duplex two-step transmission protocol is used \cite{Laneman2004}. The first step is the transmissions from users to relays, and the second step is the transmissions from relays to destinations.
To avoid interference, the users are assigned orthogonal channels using frequency-division or time-division multiple access. Without loss of generality, the transmission of User $i$ helped by Relay $j$ is elaborated here. Denote the information symbol of User $i$ as $x_i$, which has unit average energy. Applying AF relaying with coherent power coefficient \cite{Atapattu_ChinaCom}, the receive signal at Destination $i$ can be written as
{\small \begin{equation}
\label{eq:RxDest}
\begin{split}
y_{ij}=  \sqrt{\frac{PQ}{P|f_{ij}|^2+1}}f_{ij}g_{ji}x_i+\sqrt{\frac{Q}{P|f_{ij}|^2+1}}g_{ji}n_{r_j}+n_{d_i}
\end{split}
\end{equation}}
where $n_{r_j}$ and $n_{d_i}$ are the additive noises at Relay $j$ and Destination $i$, respectively, which are assumed to be i.i.d.~following $\mathcal{CN}(0,1)$. The end-to-end receive SNR of User $i$ thus equals
{\small \begin{equation}
\label{eq:EndSnra}
\begin{split}
\gamma_{ij}=  \frac{PQ|f_{ij}g_{ji}|^2}{P|f_{ij}|^2+Q|g_{ji}|^2+1}.
\end{split}
\end{equation}}
For the simplicity of the presentation, in the rest of the paper, the SNR of a user means its end-to-end receive SNR.

\newpage

One main problem of this paper is to find RS schemes that lead to good performance. At the same time, low complexity is also desired for practical consideration. For the RS problem in single-user networks, the performance criterion is straightforward. Also, the competition is only among relays not users. In contrast, the RS problem in multiple-user networks is a lot more challenging: (i) the multiple communication tasks complicate the performance criterion specification and theoretical analysis; (ii) in addition to the competition among relays, there is competition among users to select their best relays in order to maximize their individual advantages; (iii) the complexity of exhaustive search is $\mathcal{O}(N_r^{N})$, which is very high for large networks. A good RS scheme should take into account the overall network quality-of-service, the fairness  among users, and the complexity.

In \cite{Sharma2010}, an RS scheme is proposed which maximizes the minimum transmission rate of the users, which is equivalent to maximizing the minimum SNR of  the users. With this RS criterion, however, the RS solution may not be unique and only the worst user's performance is optimized. This paper uses a modified design criterion, which is specified as follows. An RS solution is call {\it optimal} if it has the following properties:
\begin{itemize}
\item Property 1: the minimal SNR, denoted as $\gamma_{\min}$, among the users is maximized, which equivalently means that the minimum achievable data rate of all users is maximized and the maximum outage or error rate of all users is minimized;
\item Property $k$ ($k=2,...,N$): conditioned on the preceding $k-1$ properties, the $k$th minimal SNR of all user SNRs is maximized.
\end{itemize}
In contrast to maximizing the minimum receive SNR only, the new optimality definition guarantees the uniqueness of the solution and considers all users in addition to the worst one.

As to the performance measure, we consider outage probability, diversity order, and array gain. An outage occurs if the SNR drops below a predetermined SNR threshold $\gamma_{th}$. The outage probability corresponding to $\gamma_{\text{min}}$, denoted as $P_\text{out,upp}$,  is thus an upper bound on the outage probability of all users because their SNRs are always not lower than $\gamma_{\text{min}}$. Diversity order shows how fast the outage probability decreases with the increase in the transmit power in the high transmit power range. It is conventionally defined as $d\triangleq-\displaystyle\lim_{P\rightarrow\infty}\frac{\log P_\text{out}}{\log {P}}$ \cite{Wang2003} where $P_\text{out}$ is the outage probability. For the same reason, the diversity order derived based on $\gamma_{\text{min}}$ is a lower bound on the diversity orders of all users. When diversity orders of two designs are the same, a better measure for performance comparison is the array gain which is the difference between the required power levels of the two designs to reach the same outage level.

To help the RS procedure, we consider all relay choices for the users and construct a receive SNR matrix as
\begin{equation}
\label{eq:EndSnr}
{\bf {\Gamma}}= \left[\begin{array}{cccc} {\boldsymbol \gamma}_{1} & {\boldsymbol\gamma}_2 & \cdots & {\boldsymbol\gamma}_{N} \end{array}\right]^T
\end{equation}
where ${\boldsymbol\gamma}_{i}=\left[\begin{array}{cccc} \gamma_{i1} & \gamma_{i2} & \cdots, & \gamma_{iN_r} \end{array}\right]$ with $\gamma_{ij}$ the receive SNR of User $i$ helped by Relay $j$. ${\boldsymbol\gamma}_{i}$ contains the possible receive SNRs of User $i$ if helped by any of the $N_r$ relays. It is the $i$th row of ${\bf {\Gamma}}$. The $j$th column of ${\bf {\Gamma}}$ contains the receive SNRs of the users if Relay $j$ is chosen to help them. ${\bf {\Gamma}}$ is a $N\times N_r$ matrix.

Now we consider the statistics of the receive SNRs which will be used for theoretical analysis later. Since all channels are i.i.d., $\gamma_{ij}$'s are also i.i.d.. Denote their cumulative distribution function (CDF) and probability density function (PDF) as $F_{\gamma}(x)$ and $f_{\gamma}(x)$, respectively. From the results in \cite{Koyuncu2008}, we have
\begin{equation}
\label{eq:adfaf}
\begin{split}
F_{\gamma}(x)=1-2\sqrt{\frac{x(x+1)}{PQ}}e^{-\left(\frac{1}{P}+\frac{1}{Q}\right)x}
\mathcal{K}_1\left(2\sqrt{\frac{x(x+1)}{PQ}}\right),
\end{split}
\end{equation}
where $\mathcal{K}_1(\cdot)$ is the modified first-order Bessel function of the second kind. Since $x\mathcal{K}_1(x)\approx 1$ for small $x$ \cite{gradshteyn2007}, $F_\gamma(x)$ can be well-approximated for large $P$ and $Q$ as
\begin{equation}
\label{eq:cdfappx}
\begin{split}
F_\gamma(x)\approx 1-e^{-(\frac{1}{P}+\frac{1}{Q})x}=\left(\frac{1}{P}+\frac{1}{Q}\right)x-\sum_{i=2}^{\infty}\left[-\left(\frac{1}{P}+\frac{1}{Q}\right)\right]^i x^i.
\end{split}
\end{equation}

If we sort $\gamma_{ij}$'s in descending order as
\begin{equation}
\gamma_1 \geq \cdots \gamma_k \geq \cdots \geq \gamma_{NN_r}
\label{SNR-order}
\end{equation}
where $\gamma_k$ is the $k$th largest element of ${\bf \Gamma}$, and using the results of eq. (7)-(14) in \cite{Papoulis} of order statistics, the PDF of $\gamma_k$ can be given as
\begin{equation}
\label{eq:orderstatpdf}
f_{\gamma_k}(x)=\frac{(NN_r)!F_\gamma(x)^{NN_r-k}[1-F_\gamma(x)]^{k-1}f_\gamma(x)}{(NN_r-k)!(k-1)!}.
\end{equation}
Using binomial expansion, and subsequently applying integration by parts, the CDF of $\gamma_k$ can be derived as $F_{\gamma_k}(x)=\int_0^x f_{\gamma_k}(t) dt$ to yield
\begin{equation}
\label{eq:orderstatcdf}
F_{\gamma_k}(x)=\sum_{i=0}^{k-1}\frac{(NN_r)!\binom{k-1}{i}(-1)^iF_{\gamma}(x)^{NN_r-k+i+1}}{(NN_r-k+i+1)(NN_r-k)!(k-1)!}.
\end{equation}

\section{RS schemes}
\label{sec-algo}
For the $N$-user $N_r$-relay network, in \cite{Sharma2010}, an RS scheme is developed using the ``linear marking" mechanism to maximize the minimal SNR among all users, with worst-case complexity $\mathcal{O}(NN_r^2)$. Here we consider the following extension to the RS scheme in \cite{Sharma2010} to obtain the RS that is optimal with the definition specified in Section \ref{sec-model}. We first apply the RS scheme in \cite{Sharma2010} to find a solution that maximizes the minimal SNR. Suppose that the minimal SNR is with User $i$ and Relay $j$. Then we delete User $i$ from the user list and delete Relay $j$ from the relay list, and apply the RS scheme in \cite{Sharma2010} again to the remaining user list and relay list. This procedure is repeated until all users find their relays. Then we get an RS result that has the properties defined in Section \ref{sec-model}. 
We refer to it as the {\it optimal relay selection (ORS)} in the sequel.
The worst-case complexity of the ORS scheme is $\mathcal{O}(N^2N_r^2)$, which is quadratic in the number of users and in the number of relays.

For networks with a large number of relays, quadratic complexity in the number of relays may be undesirable. Thus, we also propose a {\it suboptimal relay selection} (SRS) scheme, described as Algorithm \ref{algo-1}, whose complexity is linear in the number of relays.

\begin{algorithm}[ht]
\caption{The Suboptimal RS (SRS) Scheme}
\label{algo-1}
\begin{algorithmic}[1]
\STATE Assign ${\bf \Gamma}_0={\bf \Gamma}$.
\FOR{$k=N:1$}
\STATE Let $k$ denote the number of rows in ${\bf \Gamma}_0$.
\STATE Find the maximum element of each row of ${\bf \Gamma}_0$. Denote the $k$ elements as $\gamma_{1j_1^*}, \cdots, \gamma_{kj_k^*}$.
\STATE Find $\gamma_{i^*j^*}=\text{min}\left(\gamma_{1j_1^*}, \cdots, \gamma_{kj_k^*}\right)$, and assign Relay $j^*$ to User $i^*$.
\STATE Delete the $j^*$th column and the $i^*$th row of ${\bf \Gamma}_0$.
\ENDFOR
\end{algorithmic}
\end{algorithm}

The main idea of the SRS is to find a relay for each user sequentially (not necessaries in the order of the user index) to achieve a complexity that is linear in the number of relays. In Step 4, the best relay for each user that has not selected a relay yet is found. To avoid RS conflict, in Step 5, the user with the smallest best SNR selects its best relay. This procedure is repeated until all users have made their selections.

Now we consider the worst-case complexity of the SRS scheme. If we consider the $k$th round of RS,  the required number of operations in Step 4 to find the maximum elements of $(N-(k-1))$ rows is $(N_r-k)(N-k+1)$; the required number of operations in Step 5 is $N-k$. Therefore, the total complexity for the SRS is
\begin{equation}
\label{e:complexity}
\begin{split}
\mathcal C  = \sum_{k=1}^{N} \left[(N_r-k)(N-k+1)+(N-k)\right]  = \frac{N(3NN_r+3N_r-N^2-5)}{6}.
\end{split}
\end{equation}
Noting that $N\le N_r$, from (\ref{e:complexity}), the complexity behaves as $\mathcal O\left(N^2N_r\right)$, linear in the number of relays and quadratic in the number of users. Therefore, for networks with many more relays than users, the SRS is advantageous in complexity.

The SRS does not always result in the optimal solution. When the best relays of two or more of the users are the same, the SRS scheme may lead to a suboptimal result. To see this, consider the following example of a network with two uses and four relays. For one channel realization, we have the SNR matrix: ${\bf \Gamma}= \left[ \begin{array}{ccccc}
1.08  &  0.14  &  0.09  &  0.05 \\
1.07 &  0.15  &  0.50  &  0.04 \end{array} \right]$. The ORS scheme selects Relay 1 for User 1 and Relay 3 for User 2, with SNR being $1.08$ and $0.5$ for the two users, respectively. This is the optimal RS solution. The SRS however selects Relay 2 for User 1 and Relay 1 for User 2, with the SNRs being $0.14$ and $1.07$ for the two users, respectively, which is not optimal.

In this section, we also introduce a naive RS scheme as a benchmark in evaluating the ORS and the SRS schemes. Intuitively, for the multi-user network, a naive method is to assign the best relays to the users
one by one from User 1 to  User $N$.
That is, User 1 first selects its best relay (the best relay results in the maximum SNR). Then User 2 selects its best relay among the remaining $N_r-1$ relays; and so on so forth until User $N$ selects its best relay among the remaining $N_r-N+1$ relays.  As to the complexity, $N_r-k$ operations are needed to find the best relays for User $k$. Thus, the overall complexity is $\sum_{k=1}^{N}(N_r-k)=\frac{1}{2}(2NN_r-N^2-N)$, which is linear in both the number of relays and the number of users. Obviously, the naive RS does not always result in the optimal RS.

\section{Diversity Order Analysis}
\label{sec-div}
In this section, we analyze the diversity orders of the schemes introduced in Section \ref{sec-algo}. As explained in Section \ref{sec-model},  we can derive achievable diversity orders of RS schemes based on the analysis of the outage probability corresponding to $\gamma_{\text{min}}$.

\subsection{Diversity Order of ORS}\label{s:diversity_order_ORS}
It is noteworthy that to our best knowledge, the performance analysis of the ORS is not available in the literature. This paper is the first that derives the diversity order of the ORS. The following theorem is proved.

\begin{theorem}\label{t:th5}
With ORS, each user achieves diversity order $N_r$. 
\end{theorem}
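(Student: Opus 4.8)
The plan is to show that with ORS, the outage probability corresponding to $\gamma_{\min}$ decays as $P^{-N_r}$, which by the definition $d=-\lim_{P\to\infty}\log P_{\text{out}}/\log P$ gives diversity order $N_r$. Since each user's SNR is always at least $\gamma_{\min}$, the diversity order based on $\gamma_{\min}$ is a lower bound on every user's diversity order, so it suffices to establish the lower bound $N_r$; a matching upper bound will follow because no user can beat the full diversity $N_r$ of a single-user network with $N_r$ relays. First I would set $P=Q$ (or $Q=cP$) so that from \eqref{eq:cdfappx} we have $F_\gamma(x)\approx \frac{2x}{P}$ for large $P$, i.e. $F_\gamma(x)=\Theta(1/P)$ at any fixed threshold.

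The key combinatorial idea is to bound the ORS outage event. An outage (with respect to $\gamma_{\min}$) under ORS occurs exactly when there is no perfect matching of the $N$ users to $N$ distinct relays in which every matched SNR exceeds $\gamma_{th}$, because ORS maximizes $\gamma_{\min}$ and hence finds such a matching whenever one exists. So I would translate the event $\{\gamma_{\min}<\gamma_{th}\}$ into a statement about the bipartite ``success graph'' on users and relays, where an edge $(i,j)$ is present iff $\gamma_{ij}\ge \gamma_{th}$. By Hall's theorem, the outage event is the event that this graph has no $N$-matching, equivalently that Hall's condition fails: there exists a set $S$ of users whose common neighborhood of relays has size $|N(S)|<|S|$. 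The plan is to union-bound over all such violating configurations and show each contributes a probability of order $P^{-N_r}$ or smaller.

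The core estimate is that the dominant failure configuration forces many of the i.i.d.\ entries $\gamma_{ij}$ to be simultaneously below $\gamma_{th}$, and each such independent event costs a factor $F_\gamma(\gamma_{th})=\Theta(1/P)$. Concretely, if a set $S$ of $s$ users collectively can reach only a set $T$ of $t<s$ relays with good SNR, then all entries from those $s$ users to the remaining $N_r-t$ relays must be below threshold; since the columns are independent and each user is one of $s$ independent rows, counting the number of forced-small i.i.d.\ entries and taking the configuration that minimizes this count yields the exponent. I expect the cleanest route is to show the minimizing Hall violation requires at least $N_r$ independent entries to fall below $\gamma_{th}$, so $P_{\text{out,upp}}=\mathcal{O}(P^{-N_r})$, with the worst case being a single user (one row) whose all $N_r$ SNRs are below threshold, giving exactly $F_\gamma(\gamma_{th})^{N_r}=\Theta(P^{-N_r})$.

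The main obstacle will be making the Hall-condition union bound tight, i.e.\ verifying that no violating configuration forces \emph{fewer} than $N_r$ i.i.d.\ entries below threshold and thereby produces a slower-decaying term. This requires a careful optimization over all $(s,t)$ with $t<s\le N$ and over which rows and columns are involved, checking that the number of independent ``small'' entries, which scales like $s(N_r-t)$ in the simplest accounting, is minimized (over the feasible regime $t<s$, $N\le N_r$) precisely at the single-user/all-relays configuration giving exponent $N_r$. Once the lower bound $N_r$ is in hand, the upper bound is immediate from the single-user benchmark, completing $d=N_r$.
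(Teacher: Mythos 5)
Your proposal is correct, but it takes a genuinely different route from the paper. The paper works with the order statistics of the $NN_r$ i.i.d.\ entries of ${\bf \Gamma}$: it identifies which ranked entries $\gamma_k$ the optimized $\gamma_{\min}$ can equal (namely $\gamma_N$ through $\gamma_{(N-1)N_r+1}$), writes $P_{\text{out,upp,ORS}}=\sum_k \text{Prob}(\gamma_{\min}=\gamma_k)F_{\gamma_k}(\gamma_{th})$, and shows the slowest-decaying term is the one with $k=(N-1)N_r+1$ --- the event that the $N_r$ smallest entries all sit in one row --- which decays as $P^{-N_r}$. You instead observe that, because ORS maximizes $\gamma_{\min}$, the event $\{\gamma_{\min}<\gamma_{th}\}$ is exactly the non-existence of a perfect matching in the above-threshold bipartite graph, invoke Hall's condition, and union-bound over deficient sets $(S,T)$ with $|T|<|S|=s$; each such configuration forces $s(N_r-|T|)$ independent entries below threshold, and minimizing $s(N_r-s+1)$ over $1\le s\le N\le N_r$ indeed gives $N_r$ at $s=1$ (since $N(N_r-N+1)-N_r=(N-1)(N_r-N)\ge 0$), so the check you flag as the main obstacle does go through. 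Both arguments land on the same dominant event (one user's entire row below threshold, contributing $N\,F_\gamma(\gamma_{th})^{N_r}$, consistent with the paper's eq.~(13)). Your route is arguably cleaner for the diversity claim: it sidesteps the order-statistic machinery and, in particular, the paper's unjustified assertion about which ranks $\gamma_{\min}$ can occupy, and it makes the array-gain factor $N$ transparent. What the paper's decomposition buys in exchange is an exact finite-sum expression for the outage bound (reused in Section V for closed-form two-user results), which a pure union bound does not provide. One small point to tighten: your claimed ``matching upper bound'' on the diversity order should be justified by noting that a user's outage probability is at least $\text{Prob}(\text{all its } N_r \text{ SNRs} \le \gamma_{th})=F_\gamma(\gamma_{th})^{N_r}$, though the paper is equally informal on this direction.
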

\begin{proof}
The receive SNR matrix for a general network is given in (\ref{eq:EndSnr}). Using the ORS and with the SNR ordering in (\ref{SNR-order}), $\gamma_{\text{min}}$ can take $\gamma_{N}, \ldots,$ or $\gamma_{(N-1)N_r+1}$. 
Thus an outage probability upper bound, $P_{\text{out,upp,ORS}}$, which is the outage probability with respect to $\gamma_{\min}$, can be calculated as
$P_{\text{out,upp,ORS}}  = \sum_{k=N}^{(N-1)N_r+1} \text{Prob}(\gamma_{\text{min}}= \gamma_k)F_{\gamma_k}(\gamma_{th})$.

For simplicity of the presentation, we assume $Q=P$ in the sequel.
The results can be generalized to unequal power case straightforwardly as long as the powers of all nodes have the same scaling.
When $P$ is large, using  (\ref{eq:cdfappx}) and (\ref{eq:orderstatcdf}), we have the following approximation
{\small \begin{equation*}
P_{\text{out,upp,ORS}}\approx\hspace{-3mm} \sum_{k=N}^{(N-1)N_r+1}\sum_{i=0}^{k-1} \frac{\text{Prob}(\gamma_{\text{min}}= \gamma_k)(NN_r)!\binom{k-1}{i}(-1)^i\left(\frac{2\gamma_{th}}{P}-\sum_{j=2}^{\infty}\frac{(-1)^j(2\gamma_{th})^j}{P^j}\right)^{NN_r-k+i+1}}{(NN_r-k+i+1)(NN_r-k)!(k-1)!}.
\end{equation*}}
Since $\gamma_{i,j}$'s are i.i.d., $\text{Prob}\left(\gamma_{\text{min}}= \gamma_{(N-1)N_r+1}\right)$ does not depend on $P$. Thus, with respect to $P$, the highest order term in the summation is the term with $k=(N-1)N_r+1$ and $i=0$. Thus, we have
\begin{equation}
\label{eq:out_gen_opt2}
P_{\text{out,upp,ORS}}\approx \frac{\text{Prob}\left(\gamma_{\text{min}}= \gamma_{(N-1)N_r+1}\right)\left(2\gamma_{th}\right)^{N_r}(NN_r)!}{N_r(N_r-1)!(NN_r-N_r)!}P^{-N_r}
+\mathcal{O}\left(P^{-(N_r+1)}\right)
\end{equation}
which has diversity $N_r$. Since the outage probability of each user is no higher than $P_{\text{out,upp,ORS}}$, we conclude that each user achieves diversity order $N_r$.
\end{proof}

For a single-user network with $N_r$ relays, the best RS achieves diversity $N_r$. Theorem \ref{t:th5} shows that for multiple-user networks, even with user competition for relays, ORS can achieve full single-user diversity order.

However, due to user competition, the achievable array gain of the ORS for multiple-user network is smaller compared with that of the single-user case. To see this, we investigate the probability of  $\gamma_{\text{min}}=\gamma_{(N-1)N_r+1}$. The event happens when the largest $(N-1)N_r$ elements,  i.e., $\gamma_1, \ldots, \gamma_{(N-1)N_r}$, of ${\bf \Gamma}$ given in (\ref{eq:EndSnr}), are in $N-1$ rows of the $N$ rows in ${\bf \Gamma}$. In other words,
the smallest $N_r$ elements of ${\bf \Gamma}$, i.e., $\gamma_{(N-1)N_r+1}, \ldots, \gamma_{NN_r}$, are all in the same row. This happens with the probability
\begin{equation}
\label{eq:pmin_prob_opt}
\begin{split}
\text{Prob}\left(\gamma_{\text{min}}= \gamma_{(N-1)N_r+1}\right)=\frac{(N_r-1)!}{\prod_{l=1}^{N_r-1}(NN_r-l)}.
\end{split}
\end{equation}
Using  (\ref{eq:out_gen_opt2}) and (\ref{eq:pmin_prob_opt}), $P_{\text{out,upp,ORS}}$ can be simplified as
\begin{equation}
\label{eq:out_gen_opt3}
\begin{split}
P_{\text{out,upp,ORS}} \approx N\left(2\gamma_{th}\right)^{N_r}P^{-N_r}+\mathcal{O}\left( P^{-(N_r+1)}\right),
\end{split}
\end{equation}
which increases as $N$ increases. This shows that due to the competition among the $N$ users, the achievable array gain of the multi-user network degrades linearly with the number of users compared with the single-user network.

\subsection{Diversity Order of SRS}\label{s:diversity_order_SRS}
For the SRS proposed in Section \ref{sec-algo}, the following diversity order result is proved.

\begin{theorem}\label{t:th6}
With the SRS, the achievable diversity order of each user is no less than  $N_r-N+1$. 
\end{theorem}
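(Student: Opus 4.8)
The plan is to work with the outage probability of the minimum user SNR under the SRS, $P_{\text{out,upp,SRS}}=\text{Prob}(\gamma_{\min}<\gamma_{th})$, and to show that it decays no slower than $P^{-(N_r-N+1)}$. Since every user's SNR is at least $\gamma_{\min}$, its outage probability is upper bounded by $P_{\text{out,upp,SRS}}$, so this bound immediately yields the claimed achievable diversity order for each user. The core of the argument is to translate the algorithmic outage event into a simple combinatorial event on the entries of ${\bf \Gamma}$, whose probability can be evaluated directly from the i.i.d.\ statistics approximated in (\ref{eq:cdfappx}).

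The key structural observation about Algorithm \ref{algo-1} is that the user selected in any round is assigned its \emph{best} relay among the currently available relays (Steps 4 and 5, where the assigned pair $\gamma_{i^*j^*}$ is a row maximum). Consider the round in which the worst user, i.e.\ the one attaining $\gamma_{\min}$, is selected, and suppose it is the $t$th round with $1\le t\le N$. At that point exactly $t-1\le N-1$ relays have been removed, so at least $N_r-t+1\ge N_r-N+1$ relays are still available. If this user's assigned SNR falls below $\gamma_{th}$, then, because it receives its best available relay, \emph{all} of its SNRs over the available relays are below $\gamma_{th}$. Hence an outage forces the existence of at least one user having at least $N_r-N+1$ of its $N_r$ SNRs below $\gamma_{th}$. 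This event containment is the crux of the proof.

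It then remains to bound the probability of this combinatorial event. For a fixed user, the number of its $N_r$ i.i.d.\ SNRs falling below $\gamma_{th}$ is binomial with success probability $F_\gamma(\gamma_{th})$, so the probability that at least $N_r-N+1$ of them lie below $\gamma_{th}$ equals $\sum_{m=N_r-N+1}^{N_r}\binom{N_r}{m}F_\gamma(\gamma_{th})^m[1-F_\gamma(\gamma_{th})]^{N_r-m}$. Using the large-$P$ approximation $F_\gamma(\gamma_{th})\approx 2\gamma_{th}/P$ from (\ref{eq:cdfappx}) with $Q=P$, the dominant term is $m=N_r-N+1$, which scales as $P^{-(N_r-N+1)}$. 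Applying the union bound over the $N$ users preserves this order, giving $P_{\text{out,upp,SRS}}=\mathcal{O}\left(P^{-(N_r-N+1)}\right)$ and therefore an achievable diversity order of at least $N_r-N+1$ for every user.

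I anticipate the main obstacle to be this first reduction rather than the asymptotics: one must argue carefully from the min--max structure of the algorithm that the worst assigned SNR can never be blocked by more than $N-1$ relays, so that the offending user is guaranteed at least $N_r-N+1$ sub-threshold entries. Once the event containment is in place, the remaining order-statistics computation is routine. A secondary point worth checking is that no diversity is lost in passing to the union bound; this holds because each summand already scales as $P^{-(N_r-N+1)}$. Finally, since the theorem only asserts a lower bound on the diversity order, an upper bound on $P_{\text{out,upp,SRS}}$ of the correct exponent is exactly what is needed.
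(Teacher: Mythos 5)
Your proof is correct, but it takes a genuinely different route from the paper's. The paper argues by locating $\gamma_{\min}$ within the global order statistics of all $NN_r$ entries of ${\bf \Gamma}$: it asserts that under the SRS $\gamma_{\min}$ can only equal $\gamma_N,\ldots,\gamma_{(N-1)(N_r+1)+1}$, writes $P_{\text{out,upp,SRS}}=\sum_k \text{Prob}(\gamma_{\min}=\gamma_k)F_{\gamma_k}(\gamma_{th})$, and reads off the dominant term $P^{-(N_r-N+1)}$ from the order-statistic CDF in (\ref{eq:orderstatcdf}) at $k=(N-1)(N_r+1)+1$. You instead prove an event containment -- because the user attaining $\gamma_{\min}$ is assigned its row maximum over at least $N_r-N+1$ surviving relays, an outage forces some user to have at least $N_r-N+1$ sub-threshold row entries -- and then apply a binomial tail bound plus a union bound over users. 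Your argument is more elementary and in a sense more self-contained: the paper's claim about the admissible ranks of $\gamma_{\min}$ is stated without justification, and your containment argument is essentially the proof of (a slightly weaker form of) that claim. What you give up is the leading coefficient: the union bound and the dropped factor $[1-F_\gamma]^{N_r-m}$ yield only an order-of-magnitude upper bound, whereas the paper's decomposition retains the exact constant $\text{Prob}(\gamma_{\min}=\gamma_{(N-1)(N_r+1)+1})$, which is what the later array-gain comparisons in Section \ref{sec-out} build on. For the theorem as stated (a lower bound on diversity order), your bound is entirely sufficient.
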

\begin{proof}
With the SRS described in Algorithm \ref{algo-1}, $\gamma_{\text{min}}$ can take {\small $\gamma_{N}, \gamma_{N+1},\ldots,$ or $\gamma_{(N-1)(N_r+1)+1}$.} 
Thus, the outage probability can be calculated as $P_{\text{out,upp,SRS}}  = \sum_{k=N}^{(N-1)(N_r+1)+1} \text{Prob}(\gamma_{\text{min}}= \gamma_k)F_{\gamma_k}(\gamma_{th})$.

Similar to (\ref{eq:out_gen_opt2}), when $Q=P\gg1$, $P_{\text{out,upp,SRS}}$ can be approximated as
\begin{equation}
\label{eq:out_gen_semi2}
\begin{split}
P_{\text{out,upp,SRS}}\hspace{-1mm} \approx\hspace{-1mm} \frac{\text{Prob}\left(\gamma_{\text{min}}= \gamma_{(N-1)(N_r+1)+1}\right)\left(2\gamma_{th}\right)^{N_r-N+1}(NN_r)!P^{-(N_r-N+1)}}{(N_r-N+1)(N_r-N)!\left((N-1)(N_r+1)\right)!}
+\mathcal{O}\hspace{-1mm}\left(P^{-(N_r-N+2)}\right).
\end{split}
\end{equation}
Similarly, since $\text{Prob}\left(\gamma_{\text{min}}= \gamma_{(N-1)(N_r+1)+1}\right)$ does not depend on $P$, an achievable diversity order of every user is $N_r-N+1$. 
\end{proof}

\subsection{Diversity Order of Naive RS}
\label{s:diversity_order_naive}

Now, we analyze the diversity order of the naive RS. We first consider the relay assignment to User $k$ ($k\in \{1,\cdots,N\}$). User $k$ selects the best relay that results in the maximum SNR from $N_r-k+1$ available relays. Denote the maximum SNR of User $k$ as $\gamma_{\text{max},k}$. Since all SNRs, $\gamma_{ij}$'s, are i.i.d., the CDF of $\gamma_{\text{max},k}$ is thus $F_{\gamma_{\text{max},k}}(t)=[F_{\gamma_{ij}}(t)]^{N_r-(k-1)}$. The minimum SNR of the users is $\gamma_{\text{min}}=\min_{k= 1,\ldots,N}\{\gamma_{\text{max},k}\}$. The CDF of $\gamma_{\text{min}}$  is thus $F_{\text{min}}(t)=1-\prod_{k=1}^{N}\left[1-F_{\gamma_{\text{max},k}}(t)\right]$.  Therefore, an upper bound on the outage probability for the naive RS scheme is
\begin{equation}
\label{eq:out_gen_naive1}
\begin{split}
P_{\text{out,upp,naive}} = 1-\prod_{k=1}^{N}\left[1-F_{\gamma}(\gamma_{\text{th}})^{N_r-(k-1)}\right].
\end{split}
\end{equation}
When $Q=P\gg 1$, we have the following approximation
\begin{equation}
\label{eq:out_gen_naive2}
\begin{split}
P_{\text{out,upp,naive}} \approx \left(2\gamma_{th}\right)^{N_r-N+1} P^{-(N_r-N+1)}+ \mathcal{O}\left(P^{-(N_r-N+2)}\right).
\end{split}
\end{equation}
Since the outage probability of each user is no higher than $P_{\text{out,upp,naive}}$, an achievable diversity order of all users is $N_r-N+1$.

\section{Outage Probability Analysis for Two-User Networks}
\label{sec-out}

In this section, we provide outage probability analysis for two-user relay networks with $N_r \ge 2$ relays. With two users, the receive SNR matrix of the network can be written as
\begin{equation}
\label{eq:EndSnr2nr}
{\bf \Gamma}= \left[ \begin{array}{ccccc}
\gamma_{11} & \cdots & \gamma_{1j}&...& \gamma_{1N_r} \\
\gamma_{21}& \cdots & \gamma_{2j}&...& \gamma_{2N_r} \end{array} \right]_{2\times N_r}.
\end{equation}
%
\subsection{Outage Probability Bound of ORS}\label{s:outage_bound_ORS}
As mentioned in Section II, we calculate the outage probability based on the minimum SNR, $\gamma_{\text{min}}$, which provides an upper bound on both users' outage probabilities. The following theorem is proved.
\begin{theorem}\label{t:th1} For a two-user network, with the ORS, the outage probabilities of both users are upper bounded by
\begin{equation}
\label{eq:outalgo3}
\begin{split}
 P_{\text{out,upp,ORS}} =  \frac{N_r-1}{2N_r-1} F_{\gamma_2}(\gamma_{th})+\frac{N_r+2}{2(2N_r-1)}F_{\gamma_3}(\gamma_{th})  +\sum_{i=4}^{N_r+1}\frac{2N_r\binom{N_r}{i-1}}{(2N_r-(i-1))\binom{2N_r}{i-1}}F_{\gamma_i}(\gamma_{th}),
\end{split}
\end{equation}
\end{theorem}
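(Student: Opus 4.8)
The plan is to start from the same decomposition used in the diversity analysis, namely $P_{\text{out,upp,ORS}} = \sum_{k=2}^{N_r+1}\text{Prob}(\gamma_{\text{min}}=\gamma_k)\,F_{\gamma_k}(\gamma_{th})$, which is valid for $N=2$ because Property~1 of the ORS fixes $\gamma_{\text{min}}$ as the max--min selected SNR, and this can only be one of $\gamma_2,\dots,\gamma_{N_r+1}$. This identity already factors $F_{\gamma_k}$ out of each term: since the $\gamma_{ij}$ are i.i.d.\ and continuous, the vector of order statistics is independent of the permutation sending ranks to positions in ${\bf\Gamma}$, while the event $\{\gamma_{\text{min}}=\gamma_k\}$ depends only on that permutation; hence $\text{Prob}(\gamma_{\text{min}}=\gamma_k,\ \gamma_k<\gamma_{th}) = \text{Prob}(\gamma_{\text{min}}=\gamma_k)\,F_{\gamma_k}(\gamma_{th})$. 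The whole problem therefore reduces to evaluating the combinatorial weights $\text{Prob}(\gamma_{\text{min}}=\gamma_k)$, for which I may assume that the positions of the $k$ largest entries form a uniformly random $k$-subset of the $2N_r$ cells of the $2\times N_r$ grid ${\bf\Gamma}$.

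Next I would establish the key structural lemma: $\gamma_{\text{min}}=\gamma_k$ exactly when $k$ is the smallest index such that the $k$ largest entries admit a feasible assignment, i.e.\ contain one cell in each row lying in two distinct columns. The $(\le)$ direction holds because if the top $k-1$ cells were feasible the max--min value would already be at least $\gamma_{k-1}$, contradicting minimality of $k$, so every feasible assignment has min no larger than $\gamma_k$; the $(\ge)$ direction holds because the feasible assignment inside the top $k$ cells has min at least $\gamma_k$. I would then translate ``feasible'' into a pure row/column condition: a set of $m$ cells is infeasible iff it fails to cover both rows or cannot place the two users in distinct columns. For $m\ge3$ this collapses to the single condition ``all $m$ cells lie in one row'' (if both rows are hit with $m\ge3$, one row holds at least two cells, so a distinct-column pair is always available); for $m=2$ there is the extra infeasible case of the two cells sharing a column in different rows. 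Consequently $\{\gamma_{\text{min}}=\gamma_k\}$ is precisely the event ``top $k-1$ infeasible and top $k$ feasible.''

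With this characterisation the counting is routine. For $k\ge4$, ``top $k-1$ infeasible'' means the $k-1$ largest entries all lie in one user's row, which occurs with probability $2\binom{N_r}{k-1}/\binom{2N_r}{k-1}$, and ``top $k$ feasible'' then forces $\gamma_k$ into the other row, i.e.\ into one of its $N_r$ cells among the $2N_r-(k-1)$ remaining cells; multiplying gives $\text{Prob}(\gamma_{\text{min}}=\gamma_k)=\dfrac{2N_r\binom{N_r}{k-1}}{(2N_r-(k-1))\binom{2N_r}{k-1}}$, exactly the summand in the stated bound. The two low-order terms $k=2,3$ are handled separately because the extra ``same-column'' infeasible configuration for $m=2$ perturbs the weight: a direct count gives $\text{Prob}(\gamma_{\text{min}}=\gamma_2)=\frac{N_r-1}{2N_r-1}$ (top two in different rows and different columns), and, adding the same-column branch, $\text{Prob}(\gamma_{\text{min}}=\gamma_3)=\frac{N_r}{2(2N_r-1)}+\frac{1}{2N_r-1}=\frac{N_r+2}{2(2N_r-1)}$. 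Inserting these three weights into the decomposition yields (\ref{eq:outalgo3}).

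The main obstacle I anticipate is the structural lemma together with the correct feasibility bookkeeping, rather than any integral or series manipulation: one must argue carefully that the max--min selected SNR coincides with the smallest feasible order statistic, and, more delicately, recognise that the distinct-column constraint is active only for $m=2$. This is exactly why the $k=2$ and $k=3$ terms carry different coefficients from the uniform $k\ge4$ formula, and pinning down those two boundary weights, including the $1/(2N_r-1)$ same-column correction, is where the proof is easiest to slip.
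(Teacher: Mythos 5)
Your proposal is correct and follows essentially the same route as the paper: the identical decomposition $P_{\text{out,upp,ORS}}=\sum_{k=2}^{N_r+1}\text{Prob}(\gamma_{\text{min}}=\gamma_k)F_{\gamma_k}(\gamma_{th})$ followed by the same three-case combinatorial count, with all weights ($\frac{N_r-1}{2N_r-1}$, $\frac{N_r+2}{2(2N_r-1)}$, and $\frac{2N_r\binom{N_r}{k-1}}{(2N_r-(k-1))\binom{2N_r}{k-1}}$) matching the paper's. The only difference is that you spell out the feasibility lemma (the max--min SNR equals $\gamma_k$ for the smallest $k$ such that the top-$k$ cells contain a row- and column-distinct pair) that the paper uses implicitly.
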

where $F_{\gamma_k}(x)$ is the CDF of $\gamma_k$ given in (\ref{eq:orderstatcdf}).
\begin{proof}
With the ORS, $\gamma_{\text{min}}$ can take $\gamma_2, \gamma_3, \cdots,$ or $\gamma_{N_r+1}$. 
The outage probability upper bound, $P_{\text{out,upp,ORS}}$, can be calculated as
\setlength{\arraycolsep}{0pt}
\begin{eqnarray}
&&P_{\text{out,upp,ORS}}  =\text{Prob}(\gamma_{\text{min}} \leq\gamma_{th})= \sum_{k=2}^{K} \text{Prob}(\gamma_{\text{min}} = \gamma_k)\text{Prob}(\gamma_k \leq\gamma_{th})\nonumber \\
& =& \sum_{k=2}^{K} \text{Prob}(\gamma_{\text{min}} = \gamma_k)F_{\gamma_k}(\gamma_{th})
\label{eq:outage2}
\end{eqnarray}
\setlength{\arraycolsep}{5pt}
where $K=N_r+1$. We now calculate the probability of  $\gamma_{\text{min}}=\gamma_k$ where $k=2, \cdots, N_r+1$ by considering the following three cases.

\begin{itemize}
\item
$\gamma_{\text{min}}=\gamma_2$ happens when $\gamma_1$ and $\gamma_2$ are in two distinct rows and columns. Thus $\text{Prob}(\gamma_{\text{min}} = \gamma_2)=\frac{N_r-1}{2N_r-1}$.
\item
$\gamma_{\text{min}}=\gamma_3$ happens when $\gamma_1$ and $\gamma_2$ are in the same column, or $\gamma_1$ and $\gamma_2$ are in the same row and $\gamma_3$ is in a different row. Thus $\text{Prob}(\gamma_{\text{min}} = \gamma_3)=\frac{1}{2N_r-1}+\frac{N_r}{2(2N_r-1)}=\frac{N_r+2}{2(2N_r-1)}$.
\item
$\gamma_{\text{min}}=\gamma_k$ for $k=4, \cdots, N_r+1$ happens when all $\gamma_1, \gamma_2, \cdots, \gamma_{k-1}$ are in the same row and $\gamma_k$ is in a different row. Then $\text{Prob}(\gamma_{\text{min}} = \gamma_k)=\frac{2N_r\binom{N_r}{k-1}}{(2N_r-(k-1))\binom{2N_r}{k-1}}$. 
\end{itemize}
Using these probabilities in (\ref{eq:outage2}), (\ref{eq:outalgo3}) can be obtained.
\end{proof}

Using (\ref{eq:orderstatcdf}) and with some straightforward algebraic manipulations, (\ref{eq:outalgo3}) can be rewritten as
\begin{equation}
\label{eq:outalgo3a}
\begin{split}
 P_{\text{out,upp,ORS}} = & F_\gamma(\gamma_{th})^{N_r}\biggl[\frac{(N_r-1)(2N_r)!}{(2N_r-1)(2N_r-2)!}  \sum_{i=0}^{1}\frac{\binom{1}{i}(-1)^iF_\gamma(\gamma_{th})^{N_r+i-1}}{2N_r+i-1} \\ &
+\frac{(N_r+2)(2N_r)!}{4(2N_r-1)(2N_r-3)!}\sum_{i=0}^{2}\frac{\binom{2}{i}(-1)^iF_\gamma(\gamma_{th})^{N_r+i-2} }{2N_r+i-2}
\\ &
+\sum_{j=4}^{N_r+1}\sum_{i=0}^{j-1}\frac{2N_r(2N_r)!\binom{N_r}{j-1}\binom{j-1}{i}(-1)^iF_\gamma(\gamma_{th})^{N_r-j+i+1}}{(2N_r-j+1)(2N_r-j+i+1)(2N_r-j)!(j-1)!\binom{2N_r}{j-1}}
\biggr].
\end{split}
\end{equation}

Now we consider the large-power approximation of the outage probability for the special case that $Q=P$. This is useful in the array gain discussion in Section \ref{s:arraygain}.

When $N_r>2$, the highest order term of $P$ in (\ref{eq:outalgo3a}) is the term with $j=N_r+1$ and $i=0$ in the double summation, which equals 2. Therefore, using (\ref{eq:cdfappx}), $P_{\text{out,upp,ORS}}$ can be approximated for large $P$ as
\begin{equation}
\label{eq:outalgo3b}
\begin{split}
P_{\text{out,upp,ORS}} \approx 2^{N_r+1}\gamma_{th}^{N_r}P^{-N_r}+ \mathcal{O}\left( P^{-(N_r+1)}\right).
\end{split}
\end{equation}
When $N_r=2$, the double summation in (\ref{eq:outalgo3a}) does not appear and the highest order term in (\ref{eq:outalgo3a}) is the term with $i=0$ in the second summation, which equals 4. Thus,
\begin{equation}\label{e:outage_bound_ORS_Nequalto2}
P_{\text{out,upp,ORS}} \approx 2^{N_r+2}\gamma_{th}^{N_r}P^{-N_r}+ \mathcal{O}\left(P^{-(N_r+1)}\right).
\end{equation}


\subsection{Outage Probability Bound of SRS}\label{s:algo2}

For the SRS, we calculate the outage probability based on $\gamma_{\text{min}}$ similarly and obtain the following theorem.
\begin{theorem}\label{t:th3}
 For a two-user network, with the SRS, the outage probabilities of both user in the network are upper bounded by
\begin{equation}
\label{eq:outalgo4}
\begin{split}
P_{\text{out,upp,SRS}}  = & \frac{N_r-1}{2N_r-1} F_{\gamma_2}(\gamma_{th})+\frac{N_r+1}{2(2N_r-1)}F_{\gamma_3}(\gamma_{th}) +\sum_{i=4}^{N_r+1}\frac{2N_r\binom{N_r}{k-1}}{(2N_r-(k-1))\binom{2N_r}{k-1}}F_{\gamma_i}(\gamma_{th})\\ &  +\sum_{i=4}^{N_r+2}\frac{2(N_r-1)\binom{N_r}{k-2}F_{\gamma_i}(\gamma_{th})}{(2N_r-(k-2))(2N_r-(k-1))\binom{2N_r}{k-2}}.
\end{split}
\end{equation}
\end{theorem}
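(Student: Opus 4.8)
The plan is to compute $\text{Prob}(\gamma_{\min}=\gamma_k)$ for the SRS scheme in the same spirit as Theorem \ref{t:th1}, but accounting for the key structural difference: the SRS is suboptimal precisely because the two users may compete for the same best relay, so the minimal selected SNR can come from a different order statistic than in the ORS case. First I would characterize, for each value of $k$, the combinatorial event on the $2\times N_r$ matrix ${\bf\Gamma}$ that makes $\gamma_{\min}$ equal to the $k$th largest entry $\gamma_k$ under Algorithm \ref{algo-1}. Recall that in Step 4--5 the user with the \emph{smaller} of the two row maxima is served first with its best relay, that column is deleted, and then the remaining user picks its best surviving relay. Tracing this mechanism, $\gamma_{\min}$ is determined by whether the two row maxima $\gamma_{1j_1^*}$ and $\gamma_{2j_2^*}$ lie in distinct columns (no conflict, behaves like ORS) or in the same column (conflict, forcing the later-served user onto a weaker relay).

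The main body of the proof is a careful case analysis on the position of the top-ranked entries. For $\gamma_{\min}=\gamma_2$ the event is again that $\gamma_1,\gamma_2$ occupy distinct rows and columns, giving the same probability $\frac{N_r-1}{2N_r-1}$ as the ORS, which matches the first term of (\ref{eq:outalgo4}). For $\gamma_{\min}=\gamma_3$ I expect a slightly smaller coefficient than the ORS ($\frac{N_r+1}{2(2N_r-1)}$ versus $\frac{N_r+2}{2(2N_r-1)}$), reflecting that one configuration that gave $\gamma_{\min}=\gamma_3$ optimally now yields a different outcome under conflict. The two summation terms in (\ref{eq:outalgo4}) then arise from the two disjoint ways a large $k$ can occur: the single sum (indices $4$ to $N_r+1$) from the no-conflict chains where $\gamma_1,\dots,\gamma_{k-1}$ lie in one row, and the second sum (indices $4$ to $N_r+2$, extending one step further) from conflict events where the served user's best relay coincides in column with the other user's best, pushing the minimum SNR down to a lower-ranked statistic. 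I would compute each probability by counting favorable row/column placements of the ordered entries and dividing by the number of equally likely arrangements, using that the $\gamma_{ij}$ are i.i.d.\ so every placement of ranks is equiprobable.

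Assembling these probabilities into the outage sum $P_{\text{out,upp,SRS}}=\sum_k \text{Prob}(\gamma_{\min}=\gamma_k)F_{\gamma_k}(\gamma_{th})$, exactly as in (\ref{eq:outage2}), yields (\ref{eq:outalgo4}). The hard part will be correctly enumerating the conflict case and identifying the precise order statistic that becomes the minimum when the two row maxima share a column; in particular one must verify that the conflict contributions extend the index range up to $N_r+2$ (one beyond the ORS range of $N_r+1$), which is the combinatorial signature of the suboptimality established by the numerical example in Section \ref{sec-algo}. A useful consistency check is that the total probability $\sum_{k}\text{Prob}(\gamma_{\min}=\gamma_k)$ must equal $1$, and that the leading-order (smallest-exponent) term of the resulting expression degrades the diversity to $N_r-N+1=N_r-1$, consistent with Theorem \ref{t:th6}.
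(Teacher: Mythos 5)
Your proposal is correct and follows essentially the same route as the paper: decompose $P_{\text{out,upp,SRS}}=\sum_{k=2}^{N_r+2}\text{Prob}(\gamma_{\min}=\gamma_k)F_{\gamma_k}(\gamma_{th})$ and evaluate each probability by counting equiprobable placements of the ranked entries in the $2\times N_r$ matrix, splitting into the no-conflict case ($\gamma_1,\dots,\gamma_{k-1}$ in one row, giving the sum up to $N_r+1$) and the conflict case ($\gamma_1,\gamma_2$ in the same column, giving the extra sum up to $N_r+2$). Your anticipated values for $k=2$ and $k=3$, including the drop from $\tfrac{N_r+2}{2(2N_r-1)}$ to $\tfrac{N_r+1}{2(2N_r-1)}$, match the paper exactly, and the remaining work you flag (pinning down the conflict event as $\gamma_2,\dots,\gamma_{k-1}$ in one row with $\gamma_1$ and $\gamma_k$ in the other) is precisely what the paper's proof does.
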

\begin{proof}
With the SRS, $\gamma_{\text{min}}$ can take $\gamma_2, \gamma_3, \cdots,$ or $\gamma_{N_r+2}$. Therefore, outage probability can be written as (\ref{eq:outage2})  where $K=N_r+2$. 
In the following, we calculate the probability of $\gamma_{\text{min}}=\gamma_k$ where $k=2, \cdots, N_r+2$. 

\begin{itemize}

\item
$\gamma_{\text{min}} = \gamma_2$ happens when $\gamma_1$ and $\gamma_2$ are in two distinct rows and columns. Thus $\text{Prob}(\gamma_{\text{min}} = \gamma_2)=\frac{N_r-1}{2N_r-1}$.

\item
$\gamma_{\text{min}} = \gamma_3$ happens when $\gamma_1$ and $\gamma_2$ are in the same column and $\gamma_3$ is in $\gamma_1$'s row, or $\gamma_1$ and $\gamma_2$ are in the same row and $\gamma_3$ is in a different row. Thus $\text{Prob}(\gamma_{\text{min}} = \gamma_3)=\frac{1}{2(2N_r-1)}+\frac{N_r}{2(2N_r-1)}=\frac{N_r+1}{2(2N_r-1)}$.

\item
$\gamma_{\text{min}}=\gamma_k$  for $k=4, \cdots, N_r+2$ happens when 1) $\gamma_2, \cdots, \gamma_{k-1}$ are in the same row, and $\gamma_1$ and $\gamma_k$ are in the other row, and $\gamma_1$ and $\gamma_2$ are in the same column. This event happens with the probability $\frac{2(N_r-1)\binom{N_r}{k-2}}{(2N_r-(k-2))(2N_r-(k-1))\binom{2N_r}{k-2}}$; or 2) $\gamma_1, \cdots, \gamma_{k-1}$ are in the same row and $\gamma_k$ is in the other row. This event happens with the probability $\frac{2N_r\binom{N_r}{k-1}}{(2N_r-(k-1))\binom{2N_r}{k-1}}$.

\end{itemize}
Using these probabilities in (\ref{eq:outage2}), (\ref{eq:outalgo4}) is obtained.
\end{proof}

Following the same steps in Section \ref{s:outage_bound_ORS}, $P_{\text{out,upp,SRS}}$ can be rewritten as 
\begin{equation}
\label{eq:outalgo4a}
\begin{split}
 P_{\text{out,upp,SRS}} = & F_\gamma(\gamma_{th})^{N_r-1}\biggl[\frac{({N_r}+1) (2 {N_r})!}{4 (2 {N_r}-1)(2{N_r}-3)!}\sum_{i=0}^{1}\frac{\binom{1}{i}(-1)^iF_\gamma(\gamma_{th})^{N_r+i}}{2N_r+i-1}\\ &
 +  \frac{({N_r}+1) (2 {N_r})!}{4 (2 {N_r}-1) (2 {N_r}-3)!}\sum_{i=0}^{2}\frac{\binom{2}{i}(-1)^iF_\gamma(\gamma_{th})^{N_r+i-1}}{2N_r+i-2}\\ & + \sum_{i=4}^{N_r+1}\sum_{j=0}^{i-1}\frac{2N_r(2N_r)!\binom{N_r}{i-1}\binom{i-1}{j}(-1)^jF_\gamma(\gamma_{th})^{N_r-i+j+2}}{(2N_r-i+1)(2N_r-i+j+1)(2N_r-i)!(i-1)!\binom{2N_r}{i-1}}\\ &\hspace{-10mm}
+ \sum_{i=4}^{N_r+2}\sum_{j=0}^{i-1}\frac{2(N_r-1)(2N_r)!\binom{N_r}{i-2}\binom{i-1}{j}(-1)^jF_\gamma(\gamma_{th})^{N_r-i+j+2}}{(2N_r-i+1)(2N_r-i+2)(2N_r-i+j+1)(2N_r-i)!(i-1)!\binom{2N_r}{i-2}}
\biggr].
\end{split}
\end{equation}

Next, we consider the large-power approximation of the outage probability for the special case that $Q=P$. Noticing that $F_{\gamma}(\gamma_{th})\approx \mathcal{O}\left(1/P\right)$, the highest order term of $P$ in (\ref{eq:outalgo4a}) is the term with $i=N_r+1$ and $j=0$ in the second double summation, which equals $2/(N_r+1)$. Therefore, using (\ref{eq:cdfappx}), $P_{\text{out,upp,SRS}}$ can be approximated for large $P$ as
\begin{equation}
\label{eq:outalgo4b}
\begin{split}
P_{\text{out,upp,SRS}} \approx \frac{2^{N_r}\gamma_{th}^{N_r-1}}{N_r+1} P^{-(N_r-1)}+ \mathcal{O}\left( P^{-N_r}\right).
\end{split}
\end{equation}

\subsection{Outage Probability Bound of  Naive RS}


Now, we consider the naive RS scheme. Using (\ref{eq:out_gen_naive1}), for two-user relay networks,
the CDF of $\gamma_{\text{min}}$ of the naive RS scheme is
$F_{\gamma_{\text{min}}}=[F_{\gamma_{ij}}(t)]^{N_r}+[F_{\gamma_{ij}}(t)]^{N_r-1}-[F_{\gamma_{ij}}(t)]^{2N_r-1}$. An upper bound on the outage probability for the naive RS scheme is thus
\begin{equation}
\label{eq:outnaivea}
\begin{split}
P_{\text{out,upp,naive}}  = F_\gamma(\gamma_{th})^{N_r-1}\biggl[1+F_\gamma(\gamma_{th})-F_\gamma(\gamma_{th})^{N_r}\biggr].
\end{split}
\end{equation}
When $Q=P\gg 1$, we have the following approximation
\begin{equation}
\label{eq:outnaiveb}
\begin{split}
P_{\text{out,upp,naive}} \approx (2\gamma_{th})^{N_r-1} P^{-(N_r-1)}
+ \mathcal{O}\left( P^{-N_r}\right).
\end{split}
\end{equation}

\subsection{Discussions}
\label{s:arraygain}
In this subsection, for the two-user network, we discuss the properties of the ORS and SRS schemes and compare with the naive scheme (the benchmark).

First, the ORS scheme is shown to produce the optimal RS result and full diversity with a complexity that is quadratic in the number of relays. We can compare its performance with single-user $N_r$-relay network to see the performance degradation due to the competition between the two users.
 For single-user network, with the best RS \cite{Adve2006}, the outage probability is $P_{\text{out,single}}=F_\gamma(\gamma_{th})^{N_r}$. For large $P$, it can be approximated as \[P_{\text{out,single}}\approx (2\gamma_{th})^{N_r} P^{-N_r}+ \mathcal{O}\left( P^{-(N_r+1)}\right).\]
 Now we compare the array gains of the single-user network and the two-user network. Considering the ratio of the outage probability upper bounds, we have
\begin{equation}
\label{eq:arraygain1}
\begin{split}
c_\text{ORS, single}=\displaystyle\lim_{P\rightarrow\infty}\frac{P_\text{out,upp,ORS}}{P_\text{out,single}}
=\left\{ \begin{array}{ll} 2\approx 3\,\text{dB} & \mbox{if $N_r>2$}\\
4\approx 6\,\text{dB} & \mbox{if $N_r=2$} \end{array}\right..
\end{split}
\end{equation}
This shows the degradation of performance of a two-user network due to the competition between the two users.
Compared with the naive RS scheme, diversity order results in Section \ref{sec-div} show that the ORS achieves a larger diversity order with higher complexity.

Second, we discuss the properties of the SRS. The SRS is suboptimal and it loses one diversity order in two-user relay networks based on the results in Section \ref{sec-div}. But it has a lower complexity: linear in the number of relays. Comparing with the naive RS scheme, the SRS has the same diversity. Now, we discuss the array gain difference of the SRS and the naive RS scheme. Considering the ratio of the outage probabilities using (\ref{eq:outalgo4b}) and (\ref{eq:outnaiveb}), we have
\begin{equation}
\label{eq:arraygain2}
\begin{split}
c_\text{naive, SRS}=\displaystyle\lim_{P\rightarrow\infty}\frac{P_\text{out,upp,naive}}{P_\text{out,upp,SRS}}=10\log \left(\frac{N_r+1}{2}\right) \text{dB}.
\end{split}
\end{equation}
This shows that the SRS has a larger array gain due to a clever order of users in selecting relays. As there are more relays in the network, the array gain advantage of SRS increases in the logarithm of the relay number.


\section{Numerical and Simulation Results}
\label{sec-simu}

In this section, we give simulation results to justify our analysis, and to evaluate the performance of the ORS, SRS, and naive RS schemes.
All nodes are assumed to have the same power, i.e., $Q=P$. The SNR threshold $\gamma_{th}$ is set to be  5dB.

Fig.~\ref{fig_verify} is on two-user networks with two and four relays. It shows simulated outage probability corresponding to $\gamma_{\min}$ (shown in small circles), exact analytical outage probability corresponding to $\gamma_{\min}$ in eqs. (\ref{eq:outalgo3a}), (\ref{eq:outalgo4a}) and (\ref{eq:outnaivea}) (shown in continuous lines), and approximated analytical outage probability corresponding to $\gamma_{\min}$ in eqs. (\ref{eq:outalgo3b}), (\ref{e:outage_bound_ORS_Nequalto2}), (\ref{eq:outalgo4b}) and (\ref{eq:outnaiveb}) (shown in dashed lines) for the ORS, SRS, and naive RS schemes. 
For the entire simulated power range, we can see that our analytical results exactly match the simulation results for all schemes and both network settings. This confirms the accuracy of our analysis. The outage probability approximations  are accurate for large $P$. This confirms the validity of our analysis in diversity order and array gain.

In Fig.~\ref{fig_Alg1Alg2}, for a two-user network with two and four relays, we show the simulated outage probabilities of User 1 with the ORS and SRS schemes and compare with the outage upper bound derived using the minimum SNR. Due to the homogeneity of the network, User 2 has the same outage probability as User 1. It can be seen from the figure that the outage probability upper bounds are tight especially when the number of relays is large. It is within 2 dB and 1 dB of the user's outage probability for $N_r=2$ and $N_r=4$, respectively.
It can be further observed that ORS and SRS have almost the same performance at low transmit power region, but ORS has better performance in the high transmit power region because of its\ diversity advantage.

In Fig.~\ref{fig_allScheme1}, we compare the simulated outage probabilities of the ORS and SRS with those of the naive RS scheme and a random RS scheme in a two-user network with 2 or 4 relays. In the random RS, each user randomly chooses a relay to help without conflict.
For the ORS, the SRS, and the random RS scheme, both users have the same outage probability, thus only the outage probability of User 1 is shown. For the naive RS scheme, outage probabilities of User 1 and User 2 are different. User 1 actually achieves the performance of the single-user case since it has all $N_r$ relays to choose from. User 2 has a worse performance since it has only $N_r-1$ relays to choose from. Consider $N_r=4$. It can be seen from Fig.~\ref{fig_allScheme1} that the random RS has diversity order 1 only, while the achieved diversity orders of ORS and SRS  are 4 and 3, respectively. The naive RS scheme also achieves diversity order 3. Further, we compare the outage probability of User 1 in ORS denoted $P_\text{out,1,ORS}$ with that of User 1 in the naive RS scheme denoted $P_\text{out,1,naive}$ (which is equivalent to the outage probability of a single-user best-relay case denoted $P_\text{out,single}$). Since $P_\text{out,1,ORS}$ is smaller than $P_\text{out,upp,ORS}$, we can conclude that $\frac{P_\text{out,1,ORS}}{P_\text{out,1,naive}}$ is less than $c_\text{ORS, single}$ given in (\ref{eq:arraygain1}). From Fig.~\ref{fig_allScheme1}, in the high transit power region, $\frac{P_\text{out,1,ORS}}{P_\text{out,1,naive}}$ is around 3 dB when $N_r=2$ and is almost 0 dB when $N_r=4$. This means with more relays, the performance of either user becomes closer to the single-user best-relay case.
Next we compare the outage probability of User 1 in SRS denoted $P_\text{out,1,SRS}$ with that of User 2 (the worse user) in the naive RS scheme denoted $P_\text{out,2,naive}$. Note that $P_\text{out,upp,SRS}$ and $P_\text{out,upp,naive}$ are upper bounds of $P_\text{out,1,SRS}$ and $P_\text{out,2,naive}$, respectively. So $\frac{P_\text{out,2,naive}}{P_\text{out,1,SRS}}$ may not be equal to $c_\text{naive, SRS}$ given in (\ref{eq:arraygain2}) but an approximation. From Fig.~\ref{fig_allScheme1}, in high transmit power region, $\frac{P_\text{out,2,naive}}{P_\text{out,1,SRS}}$ is around 4.7 dB when $N_r=2$ and around 6 dB when $N_r=4$.


Next we further investigate the array gain differences 1) between ORS and the single-user best-relay case (which is equivalent to the performance of User 1 in the naive RS scheme), and 2) between the SRS scheme and the naive RS scheme. Fig.~\ref{fig_ArrNr} (in log-log scale) shows simulation results of the outage probability bounds of ORS, SRS, and naive RS ($P_\text{out,upp,ORS}$, $P_\text{out,upp,ORS}$, $P_\text{out,upp,naive}$) and the outage probability of User 1 in the naive RS scheme (equivalent to $P_\text{out,single}$). We have the following observations from the figure. Compared with the single-user best-relay case in high transmit power region, the ORS has 6 dB loss in array gain when $N_r=2$, and has 3 dB loss in array gain when $N_r=4$. Compared with the SRS in terms of outage probability bound in high transmit power region, the naive RS scheme has 1.7 dB loss in array gain when $N_r=2$, and has 4 dB loss in array gain when $N_r=4$. These are consistent with our analysis in (\ref{eq:arraygain1}) and (\ref{eq:arraygain2}).


Figs. \ref{fig_multiuserNr4} and \ref{fig_multiuserNr6} are on three-user ($N=3$) networks with four and six relays, respectively. The two figures show the simulated outage probabilities of the users with ORS, SRS, naive, and random RS schemes.  We can see that ORS achieves diversity order 4 and 6 in the two cases, respectively, which are full diversities; SRS achieves diversity order 2 and 4 in the two cases, respectively, which are equal to ($N_r-N+1$). For the naive RS scheme, outage probabilities of User 1, User 2 and User 3 are different, and they have diversity orders 4, 3, and 2, respectively, for $N_r=4$, and 6, 5 and 4, respectively, for $N_r=6$.  Again, it can be seen that the random selection has diversity order 1 only. These observations on diversity orders confirm the validity of the approximations in the diversity order analysis in Section \ref{sec-div}.
It also shows that the proposed ORS and SRS have better fairness among users. 
\section{Conclusion}
\label{sec-con}
The relay selection problem in a network with multiple users and multiple AF relays is investigated. First, a scheme that can achieve the optimal relay selection result with complexity quadratic in number of users and in number of relays is introduced. Then a suboptimal relay selection scheme is proposed, with complexity quadratic in number of users and linear in number of relays. 
The diversity orders of the schemes are theoretically derived.
For two-user networks, outage probabilities corresponding to the minimal SNR of different relay selection schemes are theoretically derived. The suboptimal relay selection is shown to achieve a higher array gain than a naive relay selection.



\begin{thebibliography}{10}
\bibitem{Sendonaris2003a}
A.~Sendonaris, E.~Erkip, and B.~Aazhang, ``User cooperation diversity. {P}art
  {I}: System description,'' \emph{{IEEE} Trans. Commun.}, vol.~51, no.~11, pp.
  1927--1938, Nov. 2003.

\bibitem{Laneman2004}
J.~N. Laneman, D.~N.~C. Tse, and G.~W. Wornell, ``Cooperative diversity in
  wireless networks: Efficient protocols and outage behavior,'' \emph{{IEEE}
  Trans. Inform. Theory}, vol.~50, no.~12, pp. 3062--3080, Dec. 2004.

\bibitem{Bletsas06}
A.~Bletsas, A.~Khisti, D.~Reed, and A.~Lippman, ``A simple cooperative
  diversity method based on network path selection,'' \emph{{IEEE} J. Select.
  Areas Commun.}, vol.~24, no.~3, pp. 659--672, Mar. 2006.

\bibitem{Adve2007}
Y.~Zhao, R.~Adve, and T.~Lim, ``Improving amplify-and-forward relay networks:
  optimal power allocation versus selection,'' \emph{{IEEE} Trans. Wireless
  Commun.}, vol.~6, no.~8, pp. 3114--3123, Aug. 2007.

\bibitem{Molisch2008}
R.~Madan, N.~Mehta, A.~Molisch, and J.~Zhang, ``Energy-efficient cooperative
  relaying over fading channels with simple relay selection,'' \emph{{IEEE}
  Trans. Wireless Commun.}, vol.~7, no.~8, pp. 3013--3025, Aug. 2008.

\bibitem{Karagiannidis2008}
D.~Michalopoulos and G.~Karagiannidis, ``Performance analysis of single relay
  selection in {R}ayleigh fading,'' \emph{{IEEE} Trans. Wireless Commun.},
  vol.~7, no.~10, pp. 3718--3724, Oct. 2008.

\bibitem{Yindi09a}
Y.~Jing and H.~Jafarkhani, ``Single and multiple relay selection schemes and
  their achievable diversity orders,'' \emph{{IEEE} Trans. Wireless Commun.},
  vol.~8, no.~3, pp. 1414--1423, Mar. 2009.

\bibitem{Yu07}
T.~C.-Y. Ng and W.~Yu, ``Joint optimization of relay strategies and resource
  allocations in cooperative cellular networks,'' \emph{{IEEE} J. Select. Areas
  Commun.}, vol.~25, no.~2, pp. 328--339, Feb. 2007.

\bibitem{Niu2010}
L.~Sun, T.~Zhang, L.~Lu, and H.~Niu, ``On the combination of cooperative
  diversity and multiuser diversity in multi-source multi-relay wireless
  networks,'' \emph{{IEEE} Signal Processing Lett.}, vol.~17, no.~6, pp.
  535--538, June 2010.

\bibitem{Ding2011}
H.~Ding, J.~Ge, D.~Benevides~da Costa, and Z.~Jiang, ``A new efficient
  low-complexity scheme for multi-source multi-relay cooperative networks,''
  \emph{{IEEE} Trans. Veh. Technol.}, vol.~60, no.~2, pp. 716--722, Feb. 2011.

\bibitem{Wittneben2010}
C.~Esli and A.~Wittneben, ``A hierarchical {AF} protocol for distributed
  orthogonalization in multiuser relay networks,'' \emph{{IEEE} Trans. Veh.
  Technol.}, vol.~59, no.~8, pp. 3902--3916, Oct. 2010.

\bibitem{Nosratinia07}
A.~Nosratinia and T.~E. Hunter, ``Grouping and partner selection in cooperative
  wireless networks,'' \emph{{IEEE} J. Select. Areas Commun.}, vol.~25, no.~2,
  pp. 369--378, Feb. 2007.

\bibitem{Xuemin2008}
J.~Cai, X.~Shen, J.~Mark, and A.~Alfa, ``Semi-distributed user relaying
  algorithm for amplify-and-forward wireless relay networks,'' \emph{{IEEE}
  Trans. Wireless Commun.}, vol.~7, no.~4, pp. 1348--1357, Apr. 2008.

\bibitem{Sharma2010}
S.~Sharma, Y.~Shi, Y.~T. Hou, and S.~Kompella, ``An optimal algorithm for relay
  node assignment in cooperative ad hoc networks,'' \emph{IEEE/ACM Trans.
  Networking}, vol. 19. , no. 3, pp. 879--892, June 2011.

\bibitem{Atapattu_ChinaCom}
S.~Atapattu, Y.~Jing, H.~Jiang, and C.~Tellambura, ``Opportunistic relaying in
  two-way networks,'' in \emph{Proc. 5th Int. {ICST} Conf. Commun. and
  Networking}, Aug. 2010.

\bibitem{Wang2003}
Z.~Wang and G.~B. Giannakis, ``A simple and general parameterization
  quantifying performance in fading channels,'' \emph{{IEEE} Trans. Commun.},
  vol.~51, no.~8, pp. 1389--1398, Aug. 2003.

\bibitem{Koyuncu2008}
E.~Koyuncu, Y.~Jing, and H.~Jafarkhani, ``Distributed beamforming in wireless
  relay networks with quantized feedback,'' \emph{{IEEE} J. Select. Areas
  Commun.}, vol.~26, no.~8, pp. 1429--1439, Oct. 2008.

\bibitem{gradshteyn2007}
I.~S. Gradshteyn and I.~M. Ryzhik, \emph{Table of Integrals, Series and
  Products}, 7th~ed.\hskip 1em plus 0.5em minus 0.4em\relax Academic Press Inc,
  2007.

\bibitem{Papoulis}
A.~Papoulis and Unnikrishna, \emph{Probability, Random Variables and Stochastic
  Processes with Errata Sheet}.\hskip 1em plus 0.5em minus 0.4em\relax {McGraw
  Hill Higher Education}, Jan. 2002.

\bibitem{Adve2006}
Y.~Zhao, R.~Adve, and T.~J. Lim, ``Symbol error rate of selection
  amplify-and-forward relay systems,'' \emph{{IEEE} Commun. Lett.}, vol.~10,
  no.~11, pp. 757--759, Nov. 2006.

\end{thebibliography}

\clearpage
\begin{figure}[ht]
\centering
\includegraphics[width=\figwidth]{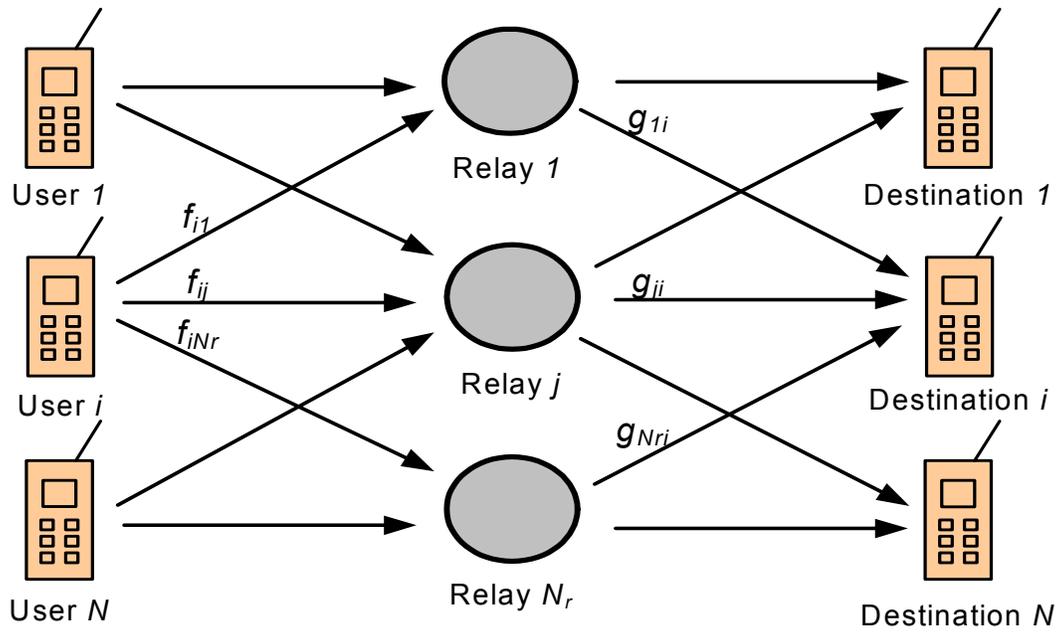}
\caption{A multiple-user relay network model.}
\label{fig_system}
\end{figure}

\begin{figure}[t]
\centering
\includegraphics[width=\figwidth]{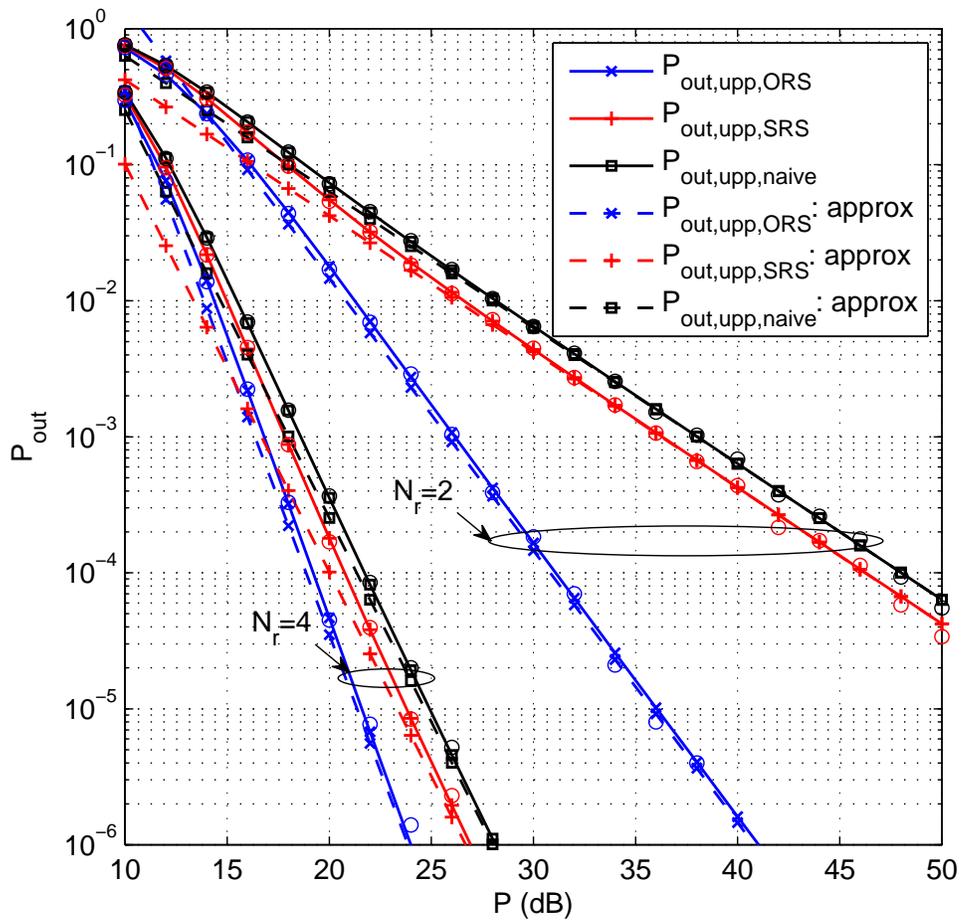}
\caption{Outage probability corresponding to $\gamma_{\min}$ for networks with two users and $N_r=2,4$ for ORS, SRS and naive RS schemes.}
\label{fig_verify}
\end{figure}

\begin{figure}[t]
\centering
\includegraphics[width=\figwidth]{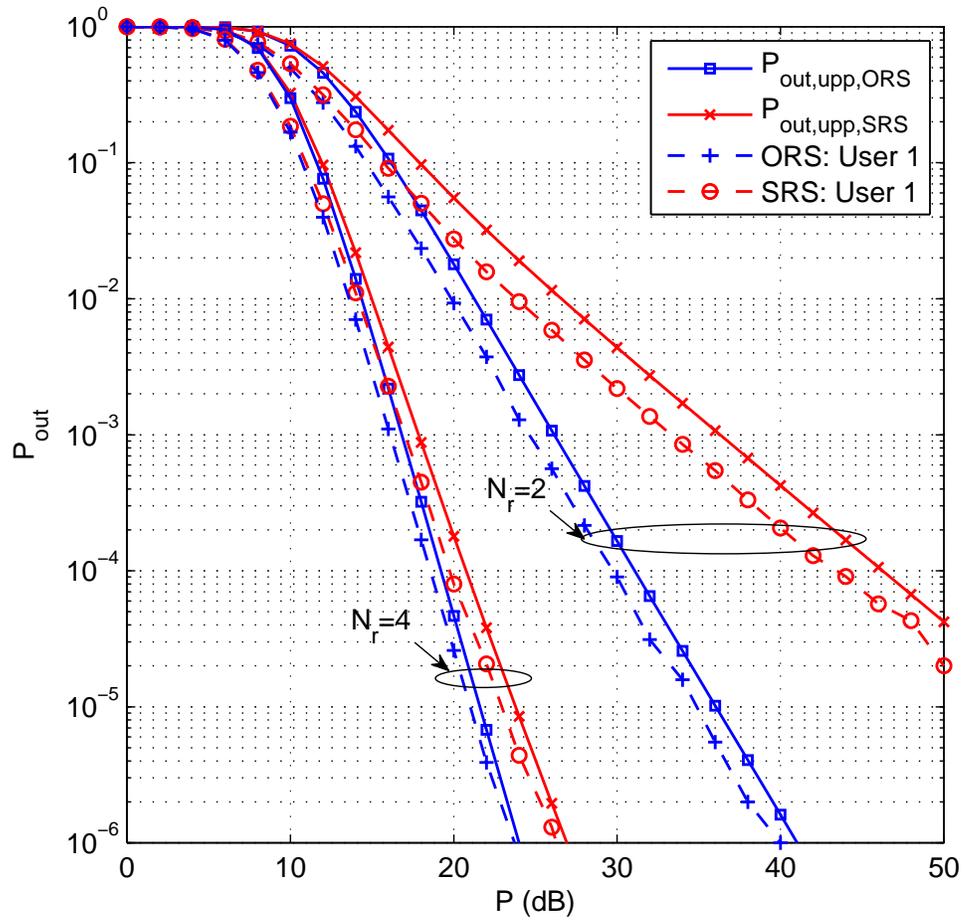}
\caption{Outage probabilities corresponding to $\gamma_{\min}$ and of users in networks with two users and $N_r=2,4$ for ORS and SRS schemes.}
\label{fig_Alg1Alg2}
\end{figure}

\begin{figure}[t]
\centering
\includegraphics[width=\figwidth]{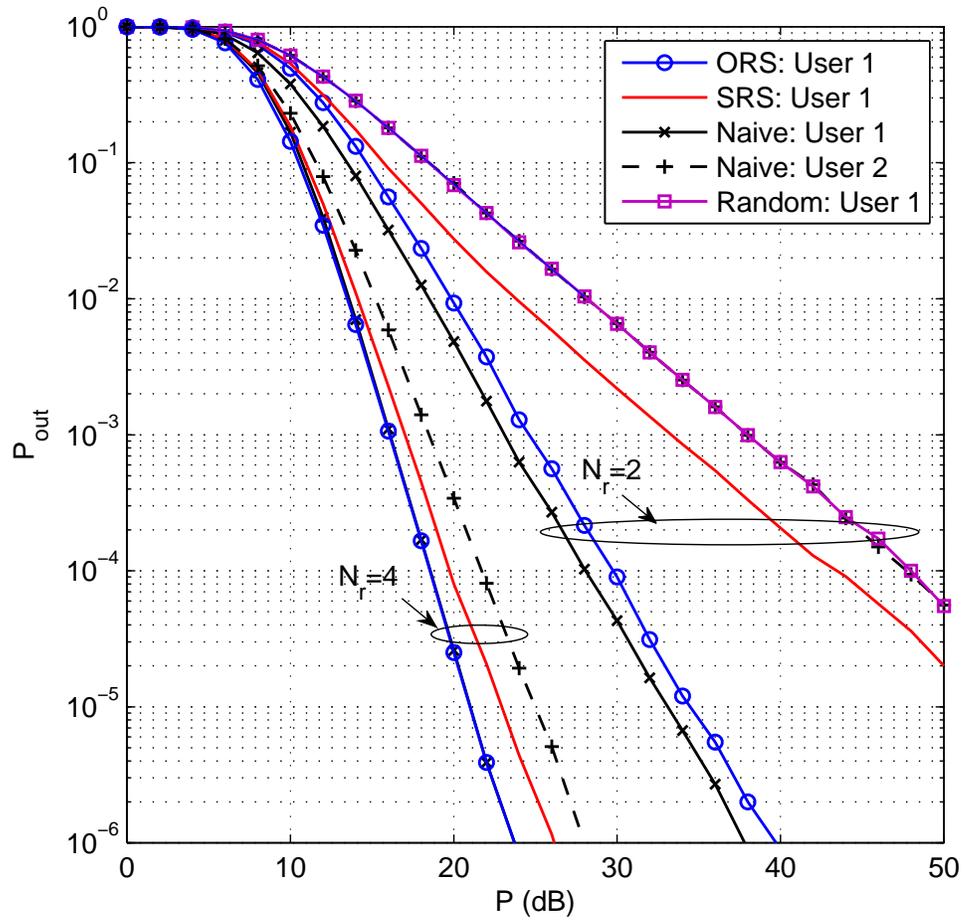}
\caption{Simulated outage probability for a network with two users and $N_r=2,4$ for ORS, SRS, naive and random RS scheme.}
\label{fig_allScheme1}
\end{figure}

\begin{figure}[t]
\centering
\includegraphics[width=\figwidth]{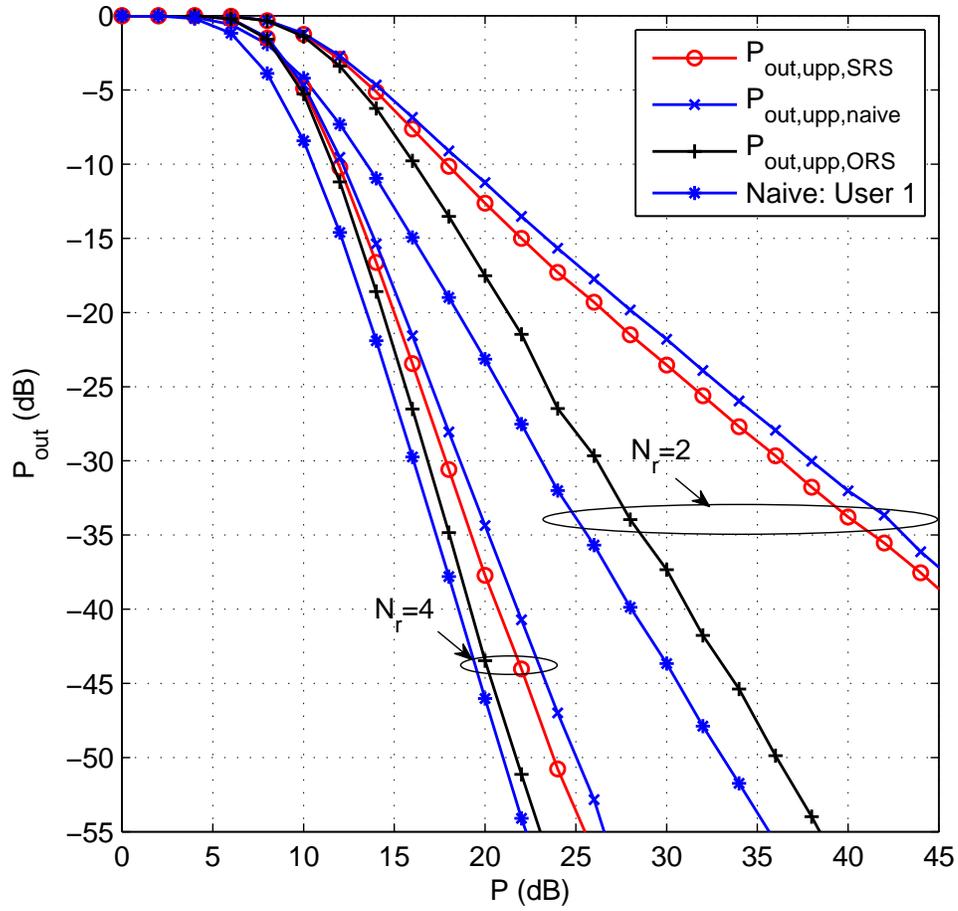}
\caption{Array gain difference between ORS and single-user best-relay case, and between SRS and the naive RS schemes.}
\label{fig_ArrNr}
\end{figure}

\begin{figure}[t]
\centering
\includegraphics[width=\figwidth]{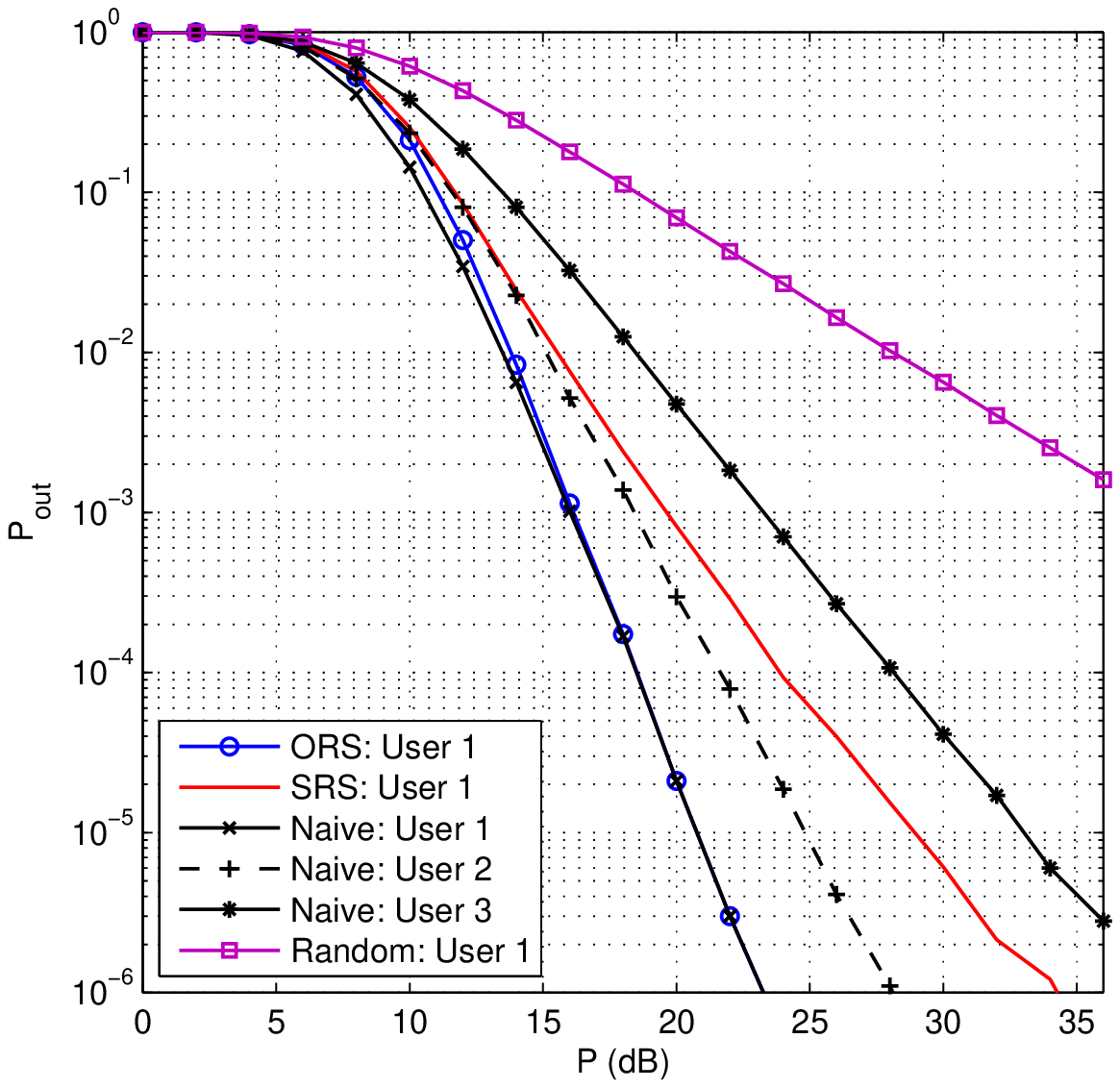}
\caption{Outage probability for a network with three users and $N_r=4$ for ORS, SRS, naive, and random RS schemes.}
\label{fig_multiuserNr4}
\end{figure}

\begin{figure}[t]
\centering
\includegraphics[width=\figwidth]{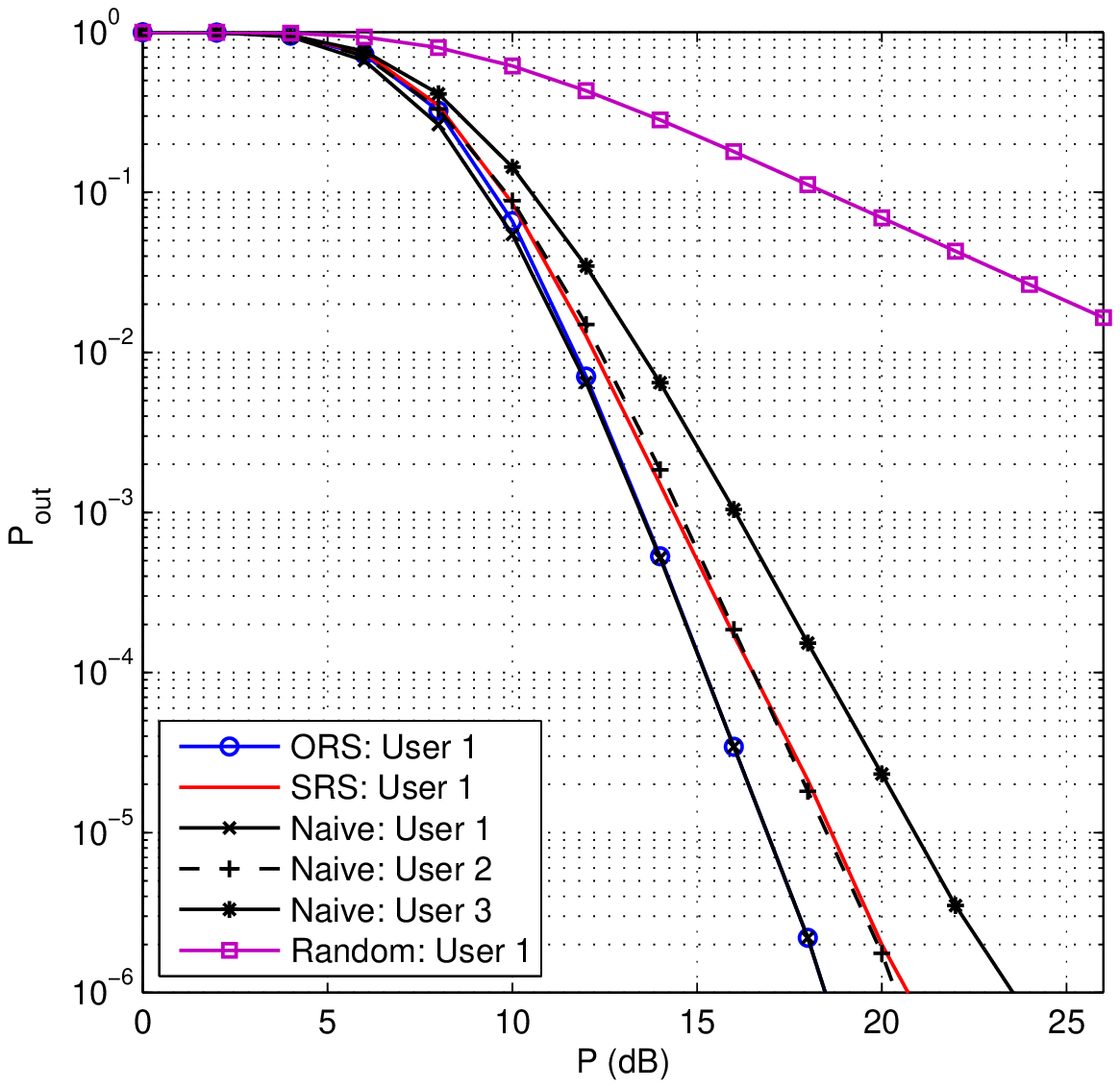}
\caption{Outage probability for a network with three users and $N_r=6$ for ORS, SRS, naive, and random RS schemes.}
\label{fig_multiuserNr6}
\end{figure}



\end{document}